\documentclass[11pt]{article}

\usepackage[margin=1in]{geometry}
\usepackage{amsmath,amssymb,amsthm,mathtools}
\usepackage{booktabs}
\usepackage{graphicx}

\usepackage{microtype}
\usepackage[dvipsnames]{xcolor}
\usepackage[colorlinks=true,citecolor=blue,linkcolor=blue,urlcolor=blue]{hyperref}

\newtheorem{theorem}{Theorem}
\newtheorem{lemma}{Lemma}
\newtheorem{proposition}{Proposition}

\newtheorem{observation}{Observation}

\theoremstyle{definition}

\newtheorem{algorithm}{Algorithm}
\theoremstyle{remark}

\newcommand{\Greedy}{\operatorname{Greedy}}
\newcommand{\cl}{\operatorname{cl}}
\newcommand{\E}{\mathbb{E}}
\newcommand{\Prb}{\mathbb{P}}

\title{A $3.1462$-Competitive Algorithm for Matroid Secretary}
\author{Hau Chan$^{1}$\quad Jianan Lin$^{2}$\quad Chenhao Wang$^{3,4}$\\[0.75em]
$^{1}$University of Nebraska--Lincoln\\
$^{2}$Rensselaer Polytechnic Institute\\
$^{3}$Beijing Normal University--Zhuhai\\
$^{4}$Beijing Normal--Hong Kong Baptist University}
\date{}

\begin{document}
\maketitle

\begin{abstract}
The matroid secretary problem asks an online algorithm to select a high-weight independent set from elements arriving in uniformly random order, with immediate and irrevocable decisions.
Singla (2026) recently gave a $4$-competitive algorithm for arbitrary matroids using only the number of elements and independence queries on already-arrived elements.
Following his approach of maintaining a dynamically updated reference set, we introduce random deletions and a time-dependent acceptance rule, improving the competitive ratio to $C_*\approx3.1462$ in the same information model, where $C_*-\log C_*=2$.
Our ordinal algorithm accepts every element of a fixed canonical optimum with probability exactly $1/C_*$ and uses $O(n^2)$ independence queries.
The algorithm maintains a greedy reference solution, protects only accepted elements, and randomly deletes unaccepted reference-basis elements.
A time-dependent acceptance rule makes the accepted set, conditional on the reference set, an independent thinning of its greedy basis.
The resulting guarantee has a direct analytic proof.
\end{abstract}

\section{Introduction}
\label{sec:introduction}
Online selection requires committing to an element before learning whether better choices will arrive later.
The matroid secretary problem combines this uncertainty with an independence constraint: elements with fixed, initially unknown weights arrive in uniformly random order, and an algorithm must immediately accept or reject each element while keeping the accepted set independent.
The objective is to maximize expected total weight relative to the maximum-weight independent set that could be chosen with full information.
Introduced by Babaioff et al.~\cite{babaioff2018matroid}, the problem generalizes single-choice and multiple-choice secretary problems and captures online allocation subject to matroid constraints.

The matroid secretary conjecture asks whether a constant competitive ratio is possible for every matroid.
Before the recent constant-factor breakthrough, general algorithms achieved an $O(\log\log r)$ competitive ratio, where $r$ is the matroid rank~\cite{lachish2014competitive,feldman2018simple}.
As the first constant guarantee, Singla~\cite{singla2026matroid} gave a $4$-competitive algorithm, which selects every element of a canonical optimum with probability at least $\frac14$,
and needs only the number $n$ of elements and independence queries on arrived elements.
It builds on the two-sided Game-of-Googol setting~\cite{correa2022twosided} and maintains a reversibly updated sample.
Independently, Abdi et al.~\cite{abdi2026strong} gave a $64$-competitive algorithm in the known-matroid model. 

We restate an equivalent description of 
Singla's algorithm more directly.
First, draw $K\sim\operatorname{Bin}(n,1/2)$ and observe and reject the first $K$ elements in the real arrival order, forming a sample set $S$.
Construct a virtual order by placing the sample elements into $K$ uniformly chosen positions and the remaining elements into the other positions, preserving the real arrival order within each group.
This can be implemented online.
Initialize the \emph{reference set} to $X=S$ and the set of virtually processed elements to $F=\varnothing$.
When processing an element $e$ in the virtual order, propose $X'=X\setminus\{e\}$ if $e\in S$, and $X'=X\cup\{e\}$ otherwise.
Perform the update $X\leftarrow X'$ only if
\[
\Greedy(X')\cap F=\Greedy(X)\cap F,
\]
where $\Greedy(Y)$ is the max-weight independent set obtained by the greedy rule on $Y$.
A nonsample element $e$ is accepted exactly when the update is permitted and $e\in\Greedy(X')$; sample elements are never accepted.
Whether or not the update is permitted, add $e$ to $F$.
In particular, $F$ records the virtual past, not the real arrival past.

In this note, we give a new algorithm with an improved competitive ratio of $C_*\approx3.1462$, where $C_*-\log C_*=2$.
It needs only the number $n$ of elements in advance and independence-oracle access to subsets of arrived elements.
More precisely, our algorithm selects each element of the canonical optimum with probability exactly $1/C_*\approx0.3178$, and uses $O(n^2)$ independence queries (Theorem~\ref{thm:main}).
The guarantee holds for arbitrary matroids. 

\paragraph{The Algorithmic Idea.}
In the description above, Singla's update condition protects the greedy membership of every element in $F$, not just the accepted elements.
In particular, a sample element that remains in $\Greedy(X)$ after its virtual processing cannot subsequently be displaced by an arriving element, although it has already been rejected.
Call  $\Greedy(X)$ the \emph{reference basis}.
Our algorithm instead protects only the accepted set $A$, maintaining $A\subseteq\Greedy(X)$, while allowing unaccepted reference-basis elements to be displaced from the basis or deleted from the reference set.
The challenge is to choose the deletion and acceptance rules so that the resulting distribution can still be characterized explicitly.

Singla's algorithm follows the broad idea of repeatedly recomputing an optimum to guide online decisions, which also appears in the earlier work of Kesselheim et al.~\cite{kesselheim2013optimal}.
For the distributional question above, we draw more specifically on Abdi et al.~\cite{abdi2026strong}. They use a related update rule in the single-sample prophet setting and establish an exact distributional invariant: conditional on the current reference, 
the accepted set is a $1/2$-thinning of the processed reference greedy basis (i.e., each processed basis element is accepted independently with probability $1/2$).
We use this conditional-thinning viewpoint to design a time-dependent law for the accepted set in the secretary setting.

We generate $n$ independent uniform times in $[0,1]$ and assign them in increasing order to the real arrivals.
The elements with times at most $p\in(0,1)$ form the rejected sample $S$. Initially, the reference set is $X=S$ and the accepted set is $A=\varnothing$.
Instead of assigning sample elements positions in a virtual order as in \cite{singla2026matroid}, we run random deletions between arrivals: each element of $\Greedy(X)\setminus A$ is deleted from $X$ at rate $1/p$.
For an arrival $e$ at time $t>p$, let $B=\Greedy(X)$ and $B^+=\Greedy(X\cup\{e\})$.
If $e\notin B^+$, reject it.
If $B^+$ replaces an element $f\in B$ by $e$, accept $e$ exactly when $f\notin A$.
If $B^+=B\cup\{e\}$, accept $e$ with probability $q(t)=\exp(-(t-p)/p)$.
Only accepted arrivals are inserted into $X$.

The deletion rate and acceptance probabilities are chosen to maintain two distributional properties at every fixed time $t\in[p,1]$:
(1) the reference set $X_t$ contains each element of $E$ independently with probability $p$; and
(2) conditional on $X_t$, the accepted set $A_t$ contains each element of $\Greedy(X_t)$ independently with probability $1-q(t)$.
We call this \emph{dynamic thinning}: the retention probability $1-q(t)$ varies with time, whereas Abdi et al.~\cite{abdi2026strong} used the fixed probability $1/2$.
Since an element of the canonical optimum belongs to $\Greedy(X_1)$ exactly when it belongs to $X_1$, its acceptance probability is exactly $p(1-q(1))$.
Optimizing over $p$ gives the factor $C_*$.
Thus, our new ingredient is a random deletion process coupled with time-dependent acceptance probabilities, together with an explicit joint distribution that establishes the guarantee.

\paragraph{Other Related Work.} Soto, Turkieltaub, and Verdugo~\cite{soto2021strong} studied
probability-competitive guarantees, which require each element of a
fixed optimum to be selected with a prescribed probability, as well as
ordinal algorithms that use only relative weight rankings.
Constant guarantees for special matroid classes precede the general breakthrough, for example, the works in \cite{berczi2025labeling,banihashem2025beating} improve graphic-matroid guarantees through labeling schemes and graph-specific algorithms.
The available information is important in these comparisons: D\"utting et al.~\cite{dutting2026graphic} studied graphic matroid secretary when the graph itself is unknown and only already-arrived elements can be queried.
Bahrani et al.~\cite{bahrani2022formal} established barriers for particular greedy and partition-based frameworks, emphasizing that the way a reference solution is used matters.
Dughmi~\cite{dughmi2022matroid} established an equivalence between the matroid secretary problem and correlated contention resolution.

Very recently, for the general matroid secretary problem, Singla~\cite{singla2026matroid} gave a $4$-competitive algorithm, based on the two-sided Game-of-Googol setting~\cite{correa2022twosided}.
It further mentions a possible improvement to approximately $3.16$ but explicitly leaves it unverified. Our $3.1462$ guarantee is already better than this. 
Independently, assuming the matroid is known in advance, Abdi et al.~\cite{abdi2026strong} obtained a $64$-competitive algorithm,   based on a $2$-competitive single-sample prophet algorithm.
They further obtained an $e$-competitive ordinal algorithm for linear matroids (including graphic matroids, regular
matroids, laminar matroids) through finite linear programs,  without asserting polynomial time.  
Bérczi et al.~\cite{bérczi2026strongsecretaryconjecturetrue} also announced an LP-based $e$-competitive algorithm for linear matroids, 
and stated a known-matroid extension to matroids admitting 
a finitary modular extension.

\section{Preliminaries}
\label{sec:preliminaries}

A matroid is a pair $M=(E,\mathcal I)$, where $E$ is a finite ground set and $\mathcal I\subseteq 2^E$ is a nonempty family of independent sets.
The family $\mathcal I$ is closed under taking subsets and satisfies the augmentation property: if $I,J\in\mathcal I$ and $|I|<|J|$, then $I\cup\{e\}\in\mathcal I$ for some $e\in J\setminus I$.
For $S\subseteq E$, its rank is $r(S)=\max\{|I|:I\subseteq S,\ I\in\mathcal I\}$, and its closure is $\cl(S)=\{e\in E:r(S\cup\{e\})=r(S)\}$.
We say that $S$ \emph{spans} $e$ if $e\in\cl(S)$, that is, adding $e$ does not increase the rank of $S$.
A set $C\subseteq E$ is \emph{closed} if $\cl(C)=C$.
A basis of $S$ is a maximal independent subset of $S$, and every such basis has size $r(S)$ and the same closure as $S$.
We use the standard facts that closure is monotone and that if $C\subseteq D$ are closed sets with $r(C)=r(D)$, then $C=D$.

In the matroid secretary problem, each element $e\in E$ has a fixed nonnegative weight $w_e$, initially unknown to the algorithm.
The elements arrive in a uniformly random order, and the identity and weight of an element are revealed upon its arrival.
The algorithm must immediately accept or reject each arriving element, before observing the next one; decisions are irrevocable, and the accepted set must remain independent.
The algorithm knows $n=|E|$ in advance but need not know the ground set or the matroid structure.
It has access to an independence oracle, which answers whether a queried set belongs to $\mathcal I$, but may query only subsets of elements that have already arrived.
In particular, the weights and matroid are fixed before the arrival order and the algorithm's independent random choices are drawn.
For $S\subseteq E$, write $w(S)=\sum_{e\in S}w_e$.

Fix a total order on element labels, independently of the arrival order, to break weight ties consistently in every greedy computation.
For $S\subseteq E$, let $\Greedy_w(S)$ scan the positive-weight elements of $S$ in decreasing weight order, breaking ties by this fixed order, and include an element whenever independence is preserved.
This returns a maximum-weight independent subset of $S$.
We use $O_w=\Greedy_w(E)$ as the \emph{canonical optimum}, so the optimum used in a per-element guarantee does not depend on the algorithm's random choices.
Writing $A$ for the algorithm's random output, the algorithm is \emph{$c$-competitive} if $\E[w(A)]\ge\frac1c w(O_w)$ for every instance.
We prove the stronger guarantee $\Prb[e\in A]\ge\frac1c$ for every $e\in O_w$, called \emph{$c$-probability-competitiveness}~\cite{soto2021strong}.
All probabilities and expectations include both the random arrival order and the algorithm's internal randomness.
The per-element guarantee implies the weight guarantee by nonnegativity and linearity of expectation:
\begin{align*}
\E[w(A)]
=\sum_{e\in E}w_e\Prb[e\in A]
\ge\sum_{e\in O_w}w_e\Prb[e\in A]
\ge\frac1c w(O_w).
\end{align*}

We may assume strictly positive weights when proving a per-element guarantee, while a nonnegative instance
can be treated by a standard weight transformation.

\paragraph{Basic Greedy Properties.}
We record two standard consequences of matroid greedy selection, corresponding to the monotonicity and one-coordinate sensitivity properties used by Singla~\cite{singla2026matroid}.
The first ensures that an optimal element is selected by the reference greedy computation whenever it is available.
The second shows that inserting such an element can remove at most one other element, which we call its \emph{exchange partner}.
We write the proof only for completeness.

\begin{lemma}
\label{lem:greedy-persistence}
For every $e\in O$ and $S\subseteq E\setminus\{e\}$, we have $e\in\Greedy(S\cup\{e\})$.
\end{lemma}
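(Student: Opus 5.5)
The key fact is the standard characterization of greedy membership: for any $S$, an element $x\in S$ with positive weight belongs to $\Greedy(S)$ if and only if $x$ is not spanned by the set $S_{\succ x}$ of elements of $S$ that precede $x$ in the greedy scan order (decreasing weight, ties broken by the fixed label order). Here $O=O_w=\Greedy(E)$ is the canonical optimum, and weights are assumed strictly positive.

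First, I will verify the characterization. When greedy reaches $x$, its current set $G_x$ is a basis of $S_{\succ x}$. This holds by induction over the scan: each processed element is either added, or rejected because it is spanned by the current set. Since $G_x$ and $S_{\succ x}$ have the same closure, $G_x\cup\{x\}$ is independent exactly when $x\notin\cl(S_{\succ x})$. This uses only the preliminaries' fact that a basis of a set has the same closure as the set.

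Next, apply the characterization to $S=E$. Since $e\in O=\Greedy(E)$, we get $e\notin\cl(E_{\succ e})$, where $E_{\succ e}$ is the set of all elements of $E$ preceding $e$ in the scan order. The scan order is a fixed total order on $E$, so it does not depend on the subset being scanned.

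Finally, take any $S\subseteq E\setminus\{e\}$ and put $S'=S\cup\{e\}$. The elements of $S'$ preceding $e$ form $S'_{\succ e}=S\cap E_{\succ e}\subseteq E_{\succ e}$. By monotonicity of closure, $\cl(S'_{\succ e})\subseteq\cl(E_{\succ e})$, which does not contain $e$. The characterization applied to $S'$ then gives $e\in\Greedy(S')$.

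The only step requiring care is the invariant that greedy's partial set spans every previously scanned element. The rest follows from monotonicity of closure, so I do not expect a real obstacle.
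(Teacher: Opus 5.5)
Your proposal is correct and follows essentially the same route as the paper's proof: the greedy set just before $e$ is a basis of the predecessor set $P$, so $e\in O$ gives $e\notin\cl(P)$, and monotonicity of closure applied to $P\cap S$ shows that $e$ is selected in the scan of $S\cup\{e\}$. You additionally spell out, by induction over the scan, why the partial greedy set is a basis of the scanned predecessors. The paper takes this step as evident.
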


\begin{proof}
Let $P$ be the set of elements preceding $e$ in the fixed weight-and-label order on $E$.
Just before scanning $e$, the greedy algorithm has selected a basis of $P$, whose closure is $\cl(P)$.
Since $e\in O$, it follows that $e\notin\cl(P)$.
In the scan of $S\cup\{e\}$, the elements preceding $e$ form $P\cap S$.
Monotonicity gives $\cl(P\cap S)\subseteq\cl(P)$, so $e\notin\cl(P\cap S)$, and the greedy algorithm selects $e$.
\end{proof}

\begin{lemma}
\label{lem:greedy-exchange}
Let $S\subseteq E\setminus\{e\}$ and suppose that $e\in\Greedy(S\cup\{e\})$.
If $e\notin\cl(S)$, then $\Greedy(S\cup\{e\})=\Greedy(S)\cup\{e\}$.
Otherwise, there is a unique $f\in\Greedy(S)$ such that $\Greedy(S\cup\{e\})=(\Greedy(S)\setminus\{f\})\cup\{e\}$.
\end{lemma}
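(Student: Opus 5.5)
The plan is to compare the two greedy scans of $S$ and of $S\cup\{e\}$ prefix by prefix. Let $x_1,x_2,\dots$ be the elements of $S\cup\{e\}$ in the fixed weight-and-label order, with $e=x_k$. For each $i$, write $G_i$ for the greedy set after scanning $x_1,\dots,x_i$ in $S\cup\{e\}$, and $H_i$ for the greedy set after scanning the same prefix with $e$ omitted. Greedy maintains a basis of the scanned prefix, so $\cl(G_i)=\cl(\{x_1,\dots,x_i\})$ and $\cl(H_i)=\cl(\{x_1,\dots,x_i\}\setminus\{e\})$. Before $e$ is scanned, the two runs coincide. By hypothesis $e$ is selected, so $G_k=H_{k-1}\cup\{e\}$. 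For every later element $x$, greedy selects $x$ in the first run iff $x\notin\cl(\text{prefix including }e)$, and in the second run iff $x\notin\cl(\text{prefix without }e)$.

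\textbf{Case $e\notin\cl(S)$.} Then $e\notin\cl(P)$ for every prefix $P\subseteq S$, by monotonicity of closure. I claim $\cl(P\cup\{e\})\cap S=\cl(P)\cap S$ for all such $P$. Suppose $x\in S$ lies in $\cl(P\cup\{e\})\setminus\cl(P)$. By the exchange property of closure, $e\in\cl(P\cup\{x\})\subseteq\cl(S)$, a contradiction. Hence every $x\in S$ after $e$ is selected in one run iff it is selected in the other, and $\Greedy(S\cup\{e\})=\Greedy(S)\cup\{e\}$. I would derive the closure exchange property from rank submodularity in one line: $r(P\cup\{x,e\})=r(P\cup\{e\})=r(P)+1$ and $r(P\cup\{x\})=r(P)+1$.

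\textbf{Case $e\in\cl(S)$.} Now $r(S\cup\{e\})=r(S)$, so both greedy outputs are bases of the same size. Since $e$ is selected, the output for $S\cup\{e\}$ has exactly one element of $S$ fewer than it would by the count alone. I would show that the elements of $S$ selected in the first run are all selected in the second, i.e.\ $\Greedy(S\cup\{e\})\setminus\{e\}\subseteq\Greedy(S)$. Indeed, if $x\in S$ is selected in the first run, then $x\notin\cl(P\cup\{e\})\supseteq\cl(P)$ for its prefix $P$ of $S$, so $x$ is also selected when $e$ is absent.

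Cardinality then gives $\Greedy(S)\setminus(\Greedy(S\cup\{e\})\setminus\{e\})$ of size exactly $|\Greedy(S)|-(r(S)-1)=1$. This yields a unique $f$ with $\Greedy(S\cup\{e\})=(\Greedy(S)\setminus\{f\})\cup\{e\}$. Uniqueness is immediate, since $f$ is determined as the single element of that set difference.

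The main obstacle is the first case, specifically the closure-exchange step showing that adding a non-spanned $e$ does not change which later elements of $S$ are spanned. Everything else is monotonicity of closure together with counting.
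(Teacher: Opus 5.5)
Your proof is correct, but it is organized differently from the paper's. The paper uses a single scan comparison for both cases. After $e$ is accepted, the selected sets are $J$ and $J\cup\{e\}$, and monotonicity of closure forces the first later disagreement to be an element $f$ kept only by the scan without $e$. At that point $J\cup\{f\}$ and $J\cup\{e\}$ are independent sets of equal size with nested closures, so their closures are equal and the two scans agree from then on. The rank count $r(S\cup\{e\})-r(S)\in\{0,1\}$ then decides whether that single disagreement occurs.

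You instead split on whether $e\in\cl(S)$ at the outset. In the first case you use the Steinitz exchange property to show that adding $e$ never changes whether a later element of $S$ is spanned. In the second case you combine the inclusion $\Greedy(S\cup\{e\})\setminus\{e\}\subseteq\Greedy(S)$, which follows from monotonicity of closure alone, with a cardinality count.

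Your second-case argument is the more economical one, since it never analyzes how the scans resynchronize. It also settles the first case: that inclusion holds regardless of the case, and when $r(S\cup\{e\})=r(S)+1$ both sides have size $r(S)$, so it is an equality. The exchange-property step you flagged as the main obstacle can therefore be dropped.

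What the paper's route gives in return is a finer picture of the swap: it identifies $f$ as the first point of disagreement after $e$ and shows the scans coincide afterwards. The rest of the paper uses only the set-level conclusion, so your argument would suffice there.
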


\begin{proof}
Compare the greedy scans of $S$ and $S\cup\{e\}$.
They select the same elements before reaching $e$ in the second scan, at which point that scan accepts $e$ by assumption.
Until the scans next disagree, their selected sets have the form $J$ and $J\cup\{e\}$, where $J$ contains the elements accepted by both scans so far.
Since $\cl(J)\subseteq\cl(J\cup\{e\})$, the first subsequent disagreement, if one occurs, must be an element $f$ accepted by the scan without $e$ and rejected by the scan with $e$.
At this point $J\cup\{f\}$ and $J\cup\{e\}$ are independent sets of the same size, and $f\in\cl(J\cup\{e\})$.
Consequently, $\cl(J\cup\{f\})\subseteq\cl(J\cup\{e\})$, and equality follows because both closures have rank $|J|+1$.
Once their selected sets have the same closure, the scans make identical decisions on all remaining elements: each decision tests membership in that closure, and adding the same accepted element preserves equality of closures.
Thus there can be only one such disagreement, giving the claimed exchange, or none, in which case the only change is the addition of $e$.
Finally, both outputs are bases of their respective input sets.
If $e\notin\cl(S)$, the rank increases by one, so no element is removed; if $e\in\cl(S)$, the rank is unchanged, so exactly one element is removed.
\end{proof}

We will also use two immediate consequences of the same scan comparison.
If $e\notin\Greedy(S\cup\{e\})$, then $\Greedy(S\cup\{e\})=\Greedy(S)$.
Call $X$ a reference. If $f\in\Greedy(X)$, applying Lemma~\ref{lem:greedy-exchange} with $S=X\setminus\{f\}$ gives
\begin{align*}
\Greedy(X)\setminus\{f\}\subseteq\Greedy(X\setminus\{f\}).
\end{align*}
Thus deleting a reference-basis element $f$ preserves all other members of that basis, possibly adding a replacement element.
These two facts motivate our algorithm's acceptance and deletion rules, 
which maintain a reference greedy basis while allowing unaccepted elements to be displaced.

\section{Algorithm and Analysis}
\label{sec:algorithm}

We call our algorithm \emph{Dynamic Thinning}.
Its parameter $p\in(0,1)$ controls both the fraction of elements used as a sample and the rate at which unaccepted reference-basis elements are deleted.
Let  $C_*=3.14619322\ldots$ be the  solution of
\begin{align}
C_*-\log C_*=2,
\qquad
p_*:=\frac{1}{C_*-1}.
\label{eq:optimal-constant}
\end{align}
The optimality of this parameter is established in Section~\ref{subsec:competitive-ratio}.
Our main result is the following.

\begin{theorem}
\label{thm:main}
For every matroid secretary instance with $n$ elements and nonnegative weights, Dynamic Thinning with parameter $p_*$ always returns an independent set and accepts each element of the canonical optimum with probability exactly $1/C_*$.
Consequently, it is $C_*$-competitive, where $C_*\approx3.1462$.
This ordinal algorithm needs only $n$ and independence-oracle access to already-arrived elements, uses at most $n+1$ greedy scans and $n(n+1)$ oracle queries, and has polynomial additional computation under the real-arithmetic sampling convention.
\end{theorem}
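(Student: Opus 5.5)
The plan is to prove the two distributional invariants stated in the introduction by induction over time. The claim is that for every fixed $t\in[p,1]$:
\begin{itemize}
\item[(I1)] $X_t$ contains each element of $E$ independently with probability $p$;
\item[(I2)] conditional on $X_t$, the set $A_t$ contains each element of $\Greedy(X_t)$ independently with probability $1-q(t)$, and $A_t\subseteq\Greedy(X_t)$.
\end{itemize}
The per-element guarantee then follows. If $e\in O_w$, then by Lemma~\ref{lem:greedy-persistence}, $e\in X_1$ already forces $e\in\Greedy(X_1)$. Hence
\[
\Prb[e\in A_1]=p\,(1-q(1))=p\bigl(1-e^{-(1-p)/p}\bigr),
\]
and we maximize over $p$. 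Differentiating $p(1-e^{-(1-p)/p})$ and writing $C=1+1/p$ leads to $C-\log C=2$, with value exactly $1/C_*$ at $p_*$. Independence of $A$ holds because $A\subseteq\Greedy(X)$ is maintained: only unaccepted basis elements are deleted, and the exchange case accepts $e$ only when the displaced $f\notin A$. The query count comes from one greedy scan per arrival plus the initial one, each using $\le n$ queries on arrived elements. The algorithm is ordinal because greedy uses only comparisons.

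For (I1), I would view the time of each element as i.i.d.\ uniform. At $t=p$, each element has arrived independently with probability $p$, and $X_p=S$. After $p$, the random deletions at rate $1/p$ and the acceptances must be shown to preserve the product-$p$ law. I would do this through a coupling: each element carries a \emph{membership clock}. An element is in $X_t$ iff it is ``alive'' in a process that is stationary with marginal $p$, in the sense that an element leaves $X$ at rate $1/p$ exactly when it is unaccepted in the basis, and new arrivals enter $X$ only when accepted. It is cleaner to check this infinitesimally. On $[t,t+dt]$, arrivals occur for each not-yet-arrived element at rate $1/(1-t)$. I would compute the generator acting on the joint law of $(X_t,A_t,\text{arrived set})$ and verify that the product form in (I1)–(I2) is preserved, with $q'(t)=-q(t)/p$. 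Concretely, condition on $X_t=x$ and consider an element $e\notin x$. Its probability flow into $X$ (arrival times acceptance) must balance the outflow of elements of $x$. Here I use Lemma~\ref{lem:greedy-exchange} to pair states $x$ and $x\cup\{e\}$: the rate of deleting $e$ from $x\cup\{e\}$, when $e$ lies in the basis, is $q(t)/p$ by (I2), and this should match the rate of inserting $e$ into $x$.

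\textbf{Main obstacle.} The difficulty is this balance together with (I2) jointly. Time-reversibility fails because arrivals are one-directional, so I would instead prove a stronger statement that includes the unarrived elements. Conditional on the set $U_t$ of unarrived elements, $X_t\cap(E\setminus U_t)$ should have a law I would pin down by the requirement that (I1) holds after averaging over $U_t$. I would then verify the Kolmogorov forward equation term by term in three cases, each of which must reproduce the derivative of the product law:
\begin{enumerate}
\item a deletion of $f\in\Greedy(X)\setminus A$, where the replacement element enters the basis and is unaccepted, matching thinning probability $q$;
\item an arrival in the exchange case, where the conditional on $f\notin A$ has probability $q$;
\item an arrival in the rank-increasing case, which is accepted with probability $q(t)$, so the new basis element is in $A$ with probability $q$.
\end{enumerate}
I expect a subtle sign bookkeeping in case~3. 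The acceptance probability $q$ versus the retention $1-q$ must be reconciled by noting that an accepted-but-later-undeletable element is exactly one that ``survived'' the deletion clock. The decrease $d(1-q)=q\,dt/p$ should then come precisely from deletions of unaccepted basis elements at rate $1/p$.
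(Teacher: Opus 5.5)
\textbf{Verdict: genuine gap.} Your reduction matches the paper's. Once the invariants hold at $t=1$, Lemma~\ref{lem:greedy-persistence} gives $\Prb[e\in A]=p(1-q(1))$, and both the optimization and the feasibility argument are fine. The gap is that the invariants, which carry the whole theorem, are announced but not proved. You rightly see that $(X_t,A_t)$ is not a Markov state and that the unarrived set must enter. But ``pinning down'' the law given $U_t$ by requiring (I1) after averaging determines nothing, because many conditional laws have that average. The argument needs an explicit candidate law on the full state $(X_t,Z_t,U_t,A_t)$, with $Z_t$ the arrived elements outside $X_t$:
\[
\Prb[X_t=x,Z_t=z,U_t=u,A_t=a]=p^{|x|}(t-p)^{|z|}(1-t)^{|u|}\,(1-q)^{|a|}q^{r(x)-|a|}\qquad(a\subseteq\Greedy(x)).
\]
So each element independently lies in $X_t$, $Z_t$, $U_t$ with probabilities $p$, $t-p$, $1-t$. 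The thinning must hold conditionally on the whole partition, not on $X_t$ alone, since transition rates depend on $u$. Without this explicit law there is no forward equation to check.

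With it, the check is organized by target state and its predecessors, and this is where your heuristics go wrong.
\begin{itemize}
\item Outflow gives $-|u|/(1-t)-(r(x)-|a|)/p$.
\item Each $e\in a$ has the single predecessor $(x\setminus\{e\},z,u\cup\{e\},a\setminus\{e\})$ and contributes $q/(p(1-q))=\frac{d}{dt}\log(1-q)$ in both subcases. If $e$ is a coloop of $x$, it is accepted with probability $q$ and the mass ratio is $(1-t)/(p(1-q))$. If it is an exchange, it is accepted with probability $1$ and the ratio is $(1-t)q/(p(1-q))$. This is how acceptance probability $q$ reconciles with retention $1-q$, not through a survival argument.
\item Each $e\in z$ has two predecessors: a rejected arrival from $(x,z\setminus\{e\},u\cup\{e\},a)$ and a deletion from $(x\cup\{e\},z\setminus\{e\},u,a)$. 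Their contributions sum to $1/(t-p)$ in all four cases:
\begin{itemize}
\item $e\notin\Greedy(x\cup\{e\})$ gives $1+0$;
\item a rank increase gives $(1-q)+q$;
\item an exchange displacing $f\in a$ gives $1+0$, and the deletion predecessor is invalid;
\item an exchange displacing $f\notin a$ gives $0+1$.
\end{itemize}
\end{itemize}
So the correct pairing is deletion against \emph{rejection}, since both move $e$ into $Z_t$. It is not deletion from $x\cup\{e\}$ against insertion into $x$, as you propose: insertions come from $U_t$, deletions go to $Z_t$, and their rates do not match state by state. Your forward cases also omit the $f\in A$ rejection case, which is exactly where the deletion predecessor becomes invalid.

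Separately, your scan count is mis-justified. Sample arrivals need no scan, but every internal deletion triggers one, so you must bound the number of deletions. The paper uses $X\setminus A\subseteq S$: deletable elements are unaccepted sample elements that are never reinserted. This gives at most $K$ deletions, hence $1+(n-K)+K=n+1$ scans.
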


\subsection{The Algorithm}
\label{subsec:algorithm}

For the analysis, assume positive weights, using the reduction in Section~\ref{sec:preliminaries} for nonnegative weights.
The algorithm maintains a reference set $X$ and the actual accepted set $A$.
The greedy set $\Greedy(X)$ is a reference basis, not the output: it can contain sample elements that have already been rejected.
Only accepted elements are protected against subsequent changes to the reference greedy basis.
An unaccepted reference-basis element can be displaced by an accepted arrival or deleted by an internal random event.
Deleting an element from $X$ changes only the reference state, and it never revokes an acceptance or reconsiders a rejection.

We use auxiliary times to specify these internal events.
Generate $n$ independent uniform random numbers in $(0,1)$, independently of the arrival permutation, and assign them to the arrivals in increasing order.
For each element $e$, let $\tau_e$ denote the auxiliary time assigned to its arrival.
Only these numbers, not the identities or weights of future arrivals, are generated in advance.
In particular, the algorithm can carry out internal events before $\tau_e$ without observing $e$'s arrival.

For $p\le t\le1$, define
\begin{align}
q(t):=\exp\left(-\frac{t-p}{p}\right).
\label{eq:thinning-schedule}
\end{align}
Thus $q(p)=1$ and $q'(t)=-q(t)/p$.
The parameter $1-q(t)$ will be the conditional probability that a reference-basis element has been accepted.
We write $\operatorname{Exp}(\lambda)$ for the exponential distribution with rate $\lambda>0$, whose survival probability at $s\ge0$ is $\exp(-\lambda s)$.

\begin{algorithm}[Dynamic Thinning]
\label{alg:dynamic-thinning}
Given $p\in(0,1)$, the algorithm operates as follows.
\begin{enumerate}
\item Generate the auxiliary times $(\tau_e)_{e\in E}$ as above and set
$K=|\{e\in E:\tau_e\le p\}|$.
Observe and reject the first $K$ arrivals, whose set is denoted by $S$.
Initialize $X=S$, $A=\varnothing$, the current time to $p$, and the reference basis to $B=\Greedy(X)$.
\item Between consecutive arrivals, run the following \emph{internal deletion process}.
Suppose the current time is $s$. If $d=|B\setminus A|=0$, advance directly to the time of the next arrival.
Otherwise, draw a fresh waiting time $D\sim\operatorname{Exp}(d/p)$.
If $s+D$ precedes the next arrival time, choose $f$ uniformly from $B\setminus A$, set
\[
X\leftarrow X\setminus\{f\},\qquad
B\leftarrow\Greedy(X),\qquad s\leftarrow s+D,
\]
and repeat; otherwise, advance to the next arrival time without change.
\item When a post-sample element $e$ arrives at its  time $t$, compute
$B^+=\Greedy(X\cup\{e\})$.
If $e\notin B^+$, reject it.
If $B^+=B\cup\{e\}$, accept it with independent probability $q(t)$.
Otherwise, Lemma~\ref{lem:greedy-exchange} gives a unique $f\in B$ such that $B^+=(B\setminus\{f\})\cup\{e\}$.
Accept $e$ if $f\notin A$, and reject it if $f\in A$.
On acceptance set
\[
X\leftarrow X\cup\{e\},\qquad A\leftarrow A\cup\{e\},\qquad B\leftarrow B^+.
\]
On rejection leave $X$, $A$, and $B$ unchanged.
Resume the internal deletion process.
\item After the last arrival, run the internal deletion process until time $1$ and return $A$.
\end{enumerate}
\end{algorithm}

In Step~2, time $1$ serves as the next event boundary when no arrivals remain.
The waiting time is discarded whenever an arrival occurs first.
The memoryless property of exponential waiting times makes this equivalent to maintaining a separate rate-$\frac1p$ deletion clock for each element of the current $B\setminus A$, i.e., 
the unaccepted reference-basis members. After any change of state, all currently active deletion clocks may be restarted independently.
Time ties have probability zero and can be resolved arbitrarily.

For reference, the probability of accepting a post-sample arrival can be written as
\begin{align}
\Prb[\text{accept }e\mid X,A,t]
=
\begin{cases}
0,&e\notin B^+,\\
q(t),&B^+=B\cup\{e\},\\
0,&B^+=(B\setminus\{f\})\cup\{e\},\ f\in A,\\
1,&B^+=(B\setminus\{f\})\cup\{e\},\ f\notin A.
\end{cases}
\label{eq:acceptance-rule}
\end{align}
The conditioning in this display concerns the internal coin after the arriving element is known.
In particular, the algorithm does not always accept an arrival that increases rank.
The factor $q(t)$ in that case, together with the deletion rate $1/p$, is needed for the joint distribution proved below.

\begin{lemma}
\label{lem:feasibility}
After initialization and after every subsequent event,
\begin{align}
A\subseteq\Greedy(X),\qquad A\cap S=\varnothing,
\qquad X\setminus A\subseteq S.
\label{eq:feasibility-invariant}
\end{align}
In particular, Dynamic Thinning always returns an independent set.
\end{lemma}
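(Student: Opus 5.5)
We prove the invariant \eqref{eq:feasibility-invariant} by induction over the sequence of events: initialization, each internal deletion, each accepted arrival, and each rejected arrival. The three parts are then checked for each kind of event. Independence of the output follows at once: $A\subseteq\Greedy(X)$ and $\Greedy(X)$ is independent, so $A$ is independent at every time, in particular at time $1$. The algorithm also maintains $B=\Greedy(X)$ whenever it changes $X$ (it sets $B\leftarrow\Greedy(X)$ after deletions and $B\leftarrow B^+=\Greedy(X\cup\{e\})$ after acceptances), so we may replace $\Greedy(X)$ by $B$ throughout.

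At initialization $X=S$ and $A=\varnothing$, so all three conditions hold trivially.

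\emph{Rejected arrival.} Nothing changes.

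\emph{Accepted arrival.} An accepted arrival $e$ arrives after the sample, so $e\notin S$. The update is $X\leftarrow X\cup\{e\}$ and $A\leftarrow A\cup\{e\}$. Then $A\cap S$ remains empty, and $X\setminus A$ is unchanged, so it stays inside $S$. For the inclusion $A\cup\{e\}\subseteq B^+$ we split into the two accepting cases.
\begin{itemize}
\item If $B^+=B\cup\{e\}$, then $A\subseteq B\subseteq B^+$ by induction, and $e\in B^+$.
\item If $B^+=(B\setminus\{f\})\cup\{e\}$, acceptance requires $f\notin A$. Hence $A\subseteq B\setminus\{f\}\subseteq B^+$, and again $e\in B^+$.
\end{itemize}
The uniqueness of $f$ comes from Lemma~\ref{lem:greedy-exchange}, which applies because $e\in B^+$ in this branch and $e\notin X$ (since $X\subseteq A\cup S$ by induction and $e\notin A\cup S$).

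\emph{Internal deletion.} The deleted element $f$ is chosen from $B\setminus A$, so $A$ is unchanged, $f\notin A$, and $A\cap S=\varnothing$ is preserved. Removing $f$ from $X$ only shrinks $X\setminus A$, so $X\setminus A\subseteq S$ still holds. The key step is that $A\subseteq\Greedy(X\setminus\{f\})$. The consequence of Lemma~\ref{lem:greedy-exchange} recorded in Section~\ref{sec:preliminaries} gives $\Greedy(X)\setminus\{f\}\subseteq\Greedy(X\setminus\{f\})$. Since $A\subseteq\Greedy(X)$ and $f\notin A$, this yields $A\subseteq\Greedy(X)\setminus\{f\}\subseteq\Greedy(X\setminus\{f\})$.

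The only delicate point is this deletion step, namely that deleting an unaccepted basis element never ejects an accepted one. It is handled by the preliminary consequence of Lemma~\ref{lem:greedy-exchange}. The rest is bookkeeping: every time the algorithm changes $X$ it also recomputes $B$, so $B=\Greedy(X)$ always holds.
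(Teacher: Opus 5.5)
Your proof is correct and follows the paper's argument essentially step for step. Like the paper, you induct over initialization, rejected arrivals, the two accepting cases and internal deletions, and you handle the deletion case via the consequence $\Greedy(X)\setminus\{f\}\subseteq\Greedy(X\setminus\{f\})$ of Lemma~\ref{lem:greedy-exchange}; your extra checks (that $B=\Greedy(X)$ is always maintained, and that $e\notin X$ so Lemma~\ref{lem:greedy-exchange} applies) make explicit what the paper leaves implicit.
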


\begin{proof}
The inclusions hold initially because $X=S$ and $A=\varnothing$.
A rejected arrival changes neither set.
An accepted arrival either adds an element to the reference basis without removing anything, or replaces a reference-basis element outside $A$.
In both cases, the new accepted set is contained in the new reference basis.
A deletion removes only an $f\in\Greedy(X)\setminus A$.
By the deletion consequence of Lemma~\ref{lem:greedy-exchange},
$\Greedy(X)\setminus\{f\}\subseteq\Greedy(X\setminus\{f\})$,
so every accepted element remains in the reference basis.
This proves the first inclusion by induction.
Only post-sample arrivals can be accepted, giving $A\cap S=\varnothing$. Moreover, a post-sample element enters $X$ only when it is accepted, and
all other modifications of $X$ are deletions. It indicates that $X\setminus A\subseteq S$.
\end{proof}

\begin{lemma}
\label{lem:online-implementation}
Dynamic Thinning is implementable using only already-arrived elements and makes at most $n+1$ greedy scans and $n(n+1)$ independence queries.
Its decisions depend on weights only through their relative order, with the fixed label rule for ties.
\end{lemma}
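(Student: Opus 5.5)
We argue in three parts: which elements are ever passed to the oracle, how many greedy scans and oracle calls occur, and why only relative weight order matters.

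\emph{Online implementability.} At every moment the algorithm's state consists of $X$, $A$ and $B$, and by Lemma~\ref{lem:feasibility} we have $X\subseteq S\cup A$. Hence $X$ contains only elements that have already arrived. Every greedy computation the algorithm performs is either $\Greedy(X)$ or $\Greedy(X\cup\{e\})$, where $e$ is the element currently arriving. So every queried set is a subset of already-arrived elements. The auxiliary times are generated in advance independently of identities and weights, so $K$ and all event times are known without looking ahead. Likewise, the exponential clocks and the uniform choice of $f$ from $B\setminus A$ depend only on the current state.

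\emph{Counting scans.} There is one scan at initialization and one scan per post-sample arrival, which gives at most $1+(n-K)$ scans. Each deletion also triggers a recomputation of $\Greedy(X)$. To stay within the stated budget, we will avoid rescanning after a deletion. The deletion consequence of Lemma~\ref{lem:greedy-exchange} says the new basis is $(B\setminus\{f\})$ plus possibly one replacement. Rather than rescanning, we can charge each deletion against the finite supply of elements. Each deletion permanently removes a distinct sample element from $X$, and elements are never reinserted, since a removed element is in $S$ and only post-sample elements enter $X$. So there are at most $K$ deletions. The total number of scans is therefore at most $1+K+(n-K)=n+1$. A greedy scan over a set of size at most $n$ makes at most $n$ independence queries, since it makes one query per positive-weight element. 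This gives at most $n(n+1)$ queries in total.

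\emph{Ordinality.} The greedy rule depends only on the order in which elements are scanned, which is decreasing weight with ties broken by the fixed labels. The tests used in Step 3 are $e\in B^+$, whether $B^+=B\cup\{e\}$, and the identity of $f$. All of these are functions of the greedy outputs. The acceptance coins and the deletion clocks do not use weights at all. Therefore the whole run is determined by the relative order of the weights together with the algorithm's internal randomness.

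The main point needing care is the scan count. The obstacle is verifying that deletions number at most $K$, which requires that no deleted element re-enters $X$; this is where the invariant $X\setminus A\subseteq S$ from Lemma~\ref{lem:feasibility} is essential.
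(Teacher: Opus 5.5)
Your proof is correct and follows the paper's argument. Every queried set lies in $S\cup A$ plus the current arrival, and deletions remove only sample elements that never re-enter $X$, so there are at most $K$ deletion scans and $1+(n-K)+K=n+1$ scans of at most $n$ queries each; decisions depend only on greedy outputs and weight-free randomness. The only blemish is your sentence about avoiding a rescan after a deletion: it contradicts your own count, which correctly charges one scan per deletion as Step~2 of the algorithm prescribes, so delete it.
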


\begin{proof}
Initially, every element of $X=S$ has already arrived.
A deletion removes an observed element, while an accepted arrival inserts only the element that has just arrived.
Consequently, every element in either $X$ or $X\cup\{e\}$ has already arrived when the corresponding greedy scan is performed.
The algorithm needs neither the identities nor the independence relations of unseen elements.
All internal deletions before $\tau_e$ can be simulated before reading the arrival of element $e$, since $\tau_e$ is internal randomness chosen in advance.
The acceptance decision on that arrival is made before another arrival is read.

By Lemma~\ref{lem:feasibility}, every element that can be deleted belongs to $S$.
Once deleted, a sample element is never inserted again.
There are therefore at most $K$ deletions on every realization.
The algorithm uses one initial greedy scan, one scan per post-sample arrival, and one scan per deletion, for a total of at most
\(
1+(n-K)+K=n+1
\) greedy scans.
Each scan uses at most $n$ independence queries.

All scans use only the fixed relative weight order, and the additional probabilities depend on $p$, auxiliary times, and cardinalities, not on weight magnitudes. 
The algorithm is therefore ordinal.
\end{proof}

The computational assertion in Theorem~\ref{thm:main} uses the usual real-arithmetic convention for explicitly specified random choices: uniform and exponential sampling, evaluation of $q(t)$, and comparisons of auxiliary times are permitted operations.
Under this convention, generating and sorting the auxiliary times and carrying out all scans and updates take polynomial additional work.

\subsection{Reference States and Arrival Order}
\label{subsec:snapshot}

We first identify the random experiment generated by the auxiliary times. The following observation is immediate from probability theory.

\begin{observation}
\label{lem:clocks-and-order}
Suppose that the family $(\tau_e)_{e\in E}$ consists of independent uniform random variables on $(0,1)$.
Then, $K\sim\operatorname{Bin}(n,p)$ and $S=\{e:\tau_e\le p\}$ is an independent Bernoulli-$p$ sample of $E$.  Therefore, for any fixed $s\subseteq E$ of size $k$,
\begin{equation*}
\Prb[S=s]
=\frac{\Prb[K=k]}{\binom nk}
=p^k(1-p)^{n-k}.\qedhere
\end{equation*}
\end{observation}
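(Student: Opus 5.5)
The plan is to derive the observation directly from the construction of the auxiliary times and the uniform random arrival order.

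First, the algorithm assigns the $n$ sorted uniforms to arrival positions. Because the arrival permutation is uniform and independent of the uniforms, the family $(\tau_e)_{e\in E}$ has the same joint law as $n$ i.i.d.\ uniforms indexed by elements. The reason is that the order statistics, assigned to the elements through a uniformly random independent permutation, give exactly an exchangeable i.i.d.\ uniform family. I will take this as the hypothesis of the observation, as stated.

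Second, for each $e$ the indicators $\mathbf 1[\tau_e\le p]$ are independent Bernoulli$(p)$ random variables. Hence $S=\{e:\tau_e\le p\}$ is an independent Bernoulli-$p$ sample, and $K=|S|\sim\operatorname{Bin}(n,p)$. For a fixed $s$ with $|s|=k$, independence gives $\Prb[S=s]=p^k(1-p)^{n-k}$ directly.

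Finally, symmetry across the $\binom nk$ subsets of size $k$ shows that this also equals $\Prb[K=k]/\binom nk$, since $\Prb[K=k]=\binom nk p^k(1-p)^{n-k}$. I would also check that $S$ coincides with the set of the first $K$ arrivals. This holds because the times are assigned in increasing order, so the elements with $\tau_e\le p$ are exactly the earliest $K$ arrivals.

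There is no real obstacle here. The only point needing care is the exchangeability step in the first paragraph, namely identifying sorted uniforms composed with an independent uniform permutation as i.i.d.\ uniforms.
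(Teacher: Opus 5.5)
Your proposal is correct and matches the paper, which treats this observation as immediate from the independence of the indicators $\mathbf 1[\tau_e\le p]$ without giving further proof. Your extra checks, that sorted uniforms assigned through an independent uniform arrival permutation form an i.i.d.\ family and that $S$ is exactly the set of the first $K$ arrivals, make explicit what the paper leaves implicit in the observation's hypothesis and in the algorithm's description.
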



At a deterministic $t\in[p,1]$, write $X_t$ and $A_t$ for the reference and accepted sets immediately after all arrivals and internal deletion events with time at most \(t\) have been processed. Define
\begin{align*}
U_t:=\{e:\tau_e>t\},
\qquad
Z_t:=E\setminus(X_t\cup U_t).
\end{align*}
Every element of $X_t$ has arrived, so $(X_t,Z_t,U_t)$ is a partition of $E$.
The set $Z_t$ consists of arrived elements outside the reference set, including rejected arrivals and deleted sample elements.

Conditional on the history through time $t<1$, the arrival times of the elements in $U_t$ remain independent uniform variables on $(t,1)$.
For a particular $e\in U_t$, the conditional probability of arrival in $(t,t+h]$ is therefore $h/(1-t)$ when $0<h<1-t$.
Its arrival rate is $1/(1-t)$.
The internal deletion rate of each $f\in\Greedy(X_t)\setminus A_t$ is $1/p$, that is, such $f$ has an active exponential deletion
clock of rate \(1/p\).\footnote{Suppose \(B\setminus A=\{f_1,\ldots,f_d\}\). Give each \(f_i\) an independent deletion clock \(D_i\sim\operatorname{Exp}(1/p)\). Then \(D_{\min}:=\min_i D_i\) satisfies \(\Pr(D_{\min}>s)=\prod_i\Pr(D_i>s)=e^{-ds/p}\), so \(D_{\min}\sim\operatorname{Exp}(d/p)\). 
Since the clocks are i.i.d., each \(f_i\) is equally likely to ring first, with probability \(1/d\). Therefore, this is equivalent to drawing \(D\sim\operatorname{Exp}(d/p)\) and then choosing a uniformly random element of \(B\setminus A\) to delete,
as in Step 2 of our algorithm. 
} 
Therefore, conditional on the current state
\[
\Xi_t:=(X_t,Z_t,U_t,A_t)
\]
and on the current time \(t\), the law of the next event depends on no
earlier part of the history. An event is either the arrival of some
\(e\in U_t\), followed by the acceptance rule
\eqref{eq:acceptance-rule}, or the deletion of some
\(f\in\Greedy(X_t)\setminus A_t\).
Hence, \((\Xi_t)_{t\in[p,1]}\) is a finite-state,
time-inhomogeneous Markov process.

An independent \(\alpha\)-thinning of a set \(B\) is the random subset obtained by retaining each element of \(B\) independently with probability \(\alpha\).
Abdi et al.'s conditional-thinning lemma~\cite[Lemma~5.3]{abdi2026strong} motivates the joint distribution below.
At the arrival of each element, their process may update that element's coordinate in the reference configuration and maintains a fixed thinning probability of $1/2$.
Here reference elements may be deleted between arrivals, reference membership has probability $p$, and the thinning probability changes with time. We use the convention $0^0=1$.

\begin{lemma}
\label{lem:snapshot}
For every $t\in[p,1]$, every partition $E=x\mathbin{\dot\cup}z\mathbin{\dot\cup}u$, and every $a\subseteq\Greedy(x)$,
\begin{align}
&\Prb[X_t=x,Z_t=z,U_t=u,A_t=a] =p^{|x|}(t-p)^{|z|}(1-t)^{|u|}
(1-q(t))^{|a|}q(t)^{r(x)-|a|}.
\label{eq:joint-snapshot}
\end{align}
States with $a\not\subseteq\Greedy(x)$ have probability zero.
Consequently, each element independently belongs to \(X_t\), \(Z_t\), or \(U_t\) with probabilities \(p\), \(t-p\), and \(1-t\), respectively.
Conditional on the entire partition $(X_t,Z_t,U_t)$, the set $A_t$ is an independent $(1-q(t))$-thinning of $\Greedy(X_t)$.
\end{lemma}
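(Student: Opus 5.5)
The plan is to prove \eqref{eq:joint-snapshot} by showing that the right-hand side, call it $\pi_t(x,z,u,a)$ (set to $0$ when $a\not\subseteq\Greedy(x)$), satisfies the Kolmogorov forward equation of the time-inhomogeneous Markov process $(\Xi_t)$ and has the correct initial value. Uniqueness for a finite linear ODE system then identifies $\pi_t$ with the true law.

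On $[p,1-\varepsilon]$ all transition rates are bounded and continuous: arrivals occur at rate $1/(1-t)$ per element of $u$, and deletions at rate $1/p$ per element of $\Greedy(x)\setminus a$. So the forward equation has a unique solution there, and we let $\varepsilon\to0$ using continuity of both sides at $t=1$. The marginal statements (independent trinomial membership, and $A_t$ a $(1-q(t))$-thinning of $\Greedy(X_t)$ given the partition) are then read off by summing \eqref{eq:joint-snapshot} over $a\subseteq\Greedy(x)$, where $|\Greedy(x)|=r(x)$.

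\textbf{Initial condition.} At $t=p$ we have $q(p)=1$, $A_p=\varnothing$, $X_p=S$ and $Z_p=\varnothing$. With $0^0=1$, the formula becomes $p^{|x|}(1-p)^{|u|}$ when $z=\varnothing$ and $a=\varnothing$, and $0$ otherwise. This is exactly Observation~\ref{lem:clocks-and-order}.

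\textbf{Derivative and outflow.} Using $q'=-q/p$ and $(1-q)'=q/p$, we get
\[
\frac{\partial_t\pi_t}{\pi_t}=\frac{|z|}{t-p}-\frac{|u|}{1-t}+\frac{|a|\,q}{(1-q)p}-\frac{r(x)-|a|}{p}.
\]
The two negative terms are exactly the total outflow rate from $(x,z,u,a)$. Every arrival from $u$ changes the state, giving $|u|/(1-t)$. Every deletion clock on $\Greedy(x)\setminus a$ also changes the state, giving $(r(x)-|a|)/p$. It therefore remains to show that the inflow equals $\pi_t\bigl(|z|/(t-p)+|a|q/((1-q)p)\bigr)$. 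I would do this element by element.

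\textbf{Inflow through $z$.} For each $e\in z$ there are two possible predecessors:
\begin{itemize}
\item $(x,z-e,u+e,a)$ followed by a rejected arrival of $e$. The ratio $\pi'/\pi$ is $(1-t)/(t-p)$, and the arrival rate is $1/(1-t)$.
\item $(x+e,z-e,u,a)$ followed by deletion of $e$. The ratio is $\tfrac{p}{t-p}\,q^{r(x+e)-r(x)}$, the rate is $1/p$, and this transition requires $e\in\Greedy(x+e)\setminus a$ and $a\subseteq\Greedy(x+e)$.
\end{itemize}
Using Lemma~\ref{lem:greedy-exchange}, I split into cases:
\begin{itemize}
\item $e\notin\Greedy(x+e)$: rejection is certain and deletion is impossible.
\item $e\in\Greedy(x+e)$ and $e\notin\cl(x)$: rejection has probability $1-q$, and deletion contributes $q$ through the rank factor.
\item Exchange with $f\in a$: rejection is certain, and the deletion predecessor has $a\not\subseteq\Greedy(x+e)$, so it carries zero mass.
\item Exchange with $f\notin a$: rejection has probability $0$, and deletion contributes fully.
\end{itemize}
In every case the total is $1/(t-p)$.

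\textbf{Inflow through $a$.} For each $e\in a$ the unique predecessor is $(x-e,z,u+e,a-e)$, followed by acceptance of $e$. Since $a\subseteq\Greedy(x)$, we have $e\in\Greedy(x)$, and Lemma~\ref{lem:greedy-exchange} applies. The ratio is $\frac{1-t}{p}\cdot\frac{1}{1-q}\cdot q^{\,r(x-e)-r(x)+1}$ and the rate is $1/(1-t)$. There are two cases:
\begin{itemize}
\item $e\notin\cl(x-e)$: the $q$-exponent is $0$ and acceptance has probability $q$.
\item Exchange with some $f$: the exponent is $1$, and acceptance has probability $1$, because $f\notin\Greedy(x)\supseteq a$ forces $f\notin a-e$.
\end{itemize}
Both cases give $q/((1-q)p)$.

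I expect the main obstacle to be this case bookkeeping. In particular one must check that zero-mass predecessors (those with $a\not\subseteq\Greedy$) align exactly with the cases where acceptance or deletion is forbidden. That is where the choices $q(t)$ and deletion rate $1/p$ are forced.
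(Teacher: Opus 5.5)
Your proposal is correct and follows the paper's proof essentially step for step: the same product-form candidate checked against the Kolmogorov forward equations, the same initial check at $t=p$ via $q(p)=1$ and $0^0=1$, the same outflow terms, the same four-case inflow bookkeeping for $e\in z$ and two-case bookkeeping for $e\in a$, and the same uniqueness argument on $[p,T]$ with $T\uparrow1$. The only detail the paper states that you leave implicit is that the valid-state set is closed under transitions (Lemma~\ref{lem:feasibility}); this is what makes the zero extension off $\{a\subseteq\Greedy(x)\}$ also satisfy the forward equations.
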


\begin{proof}
Let $\mu_t(x,z,u,a)$ denote the right-hand side of \eqref{eq:joint-snapshot} for a valid state, and set it to zero otherwise.
For a fixed partition $(x,z,u)$, since $|\Greedy(x)|=r(x)$, summing the last two factors over $a\subseteq\Greedy(x)$ gives
\begin{align*}
&\sum_{a\subseteq\Greedy(x)}
(1-q(t))^{|a|}q(t)^{r(x)-|a|}=
\sum_{k=0}^{r(x)}
\binom{r(x)}{k}
(1-q(t))^k q(t)^{r(x)-k}
=1
\end{align*}
by the binomial theorem.
Hence the total mass assigned to the fixed partition $(x,z,u)$ is
\(
p^{|x|}(t-p)^{|z|}(1-t)^{|u|}.
\)
Summing this quantity over all partitions gives
\[
\sum_{(x,z,u)}
p^{|x|}(t-p)^{|z|}(1-t)^{|u|}
=
\bigl(p+(t-p)+(1-t)\bigr)^n
=1.
\]
Thus $\mu_t$ is a probability distribution on the valid states.
We will show that it satisfies the initial condition and the forward probability equations of the Markov process.

Initially,
at $t=p$, the reference $X_p$ is the initial sample $S$, the accepted set and $Z_p$ are empty, and $U_p=E\setminus S$.
Also $q(p)=1$.
Observation~\ref{lem:clocks-and-order} gives probability $p^{|x|}(1-p)^{n-|x|}$ to each possible sample set $x$.
These are exactly the masses prescribed by \eqref{eq:joint-snapshot} at time $p$.

Next, fix a valid state $\xi=(x,z,u,a)$ and an interior time $p<t<1$.
For this calculation, abbreviate $q=q(t)$ and $\mu=\mu_t(\xi)$.
All factors of $\mu$ are positive.
Since $q'=-q/p$, the derivative of
\(\mu\) with respect to $t$ satisfies
\begin{align}
\frac{d\mu}{dt}\cdot \frac{1}{\mu} &=
\frac{|z|}{t-p}-\frac{|u|}{1-t}
+\frac{|a|q}{p(1-q)}
-\frac{r(x)-|a|}{p}.\label{eq:state-derivative}
\end{align}
We now verify that $\mu_t$ change according to the algorithm's transition rules.
Let $P_t(\xi)=\Prb[\Xi_t=\xi]$ denote the actual state probability, whose equality with $\mu$ remains to be proved.
For a short interval $(t,t+h]$, let $I_h=\Prb[\Xi_t\ne\xi,\Xi_{t+h}=\xi]$ and $O_h=\Prb[\Xi_t=\xi,\Xi_{t+h}\ne\xi]$.
The identity $$P_{t+h}(\xi)-P_t(\xi)=I_h-O_h$$ expresses the change in the state's probability as the probability entering it minus the probability leaving it.
Dividing by $h$ and letting $h\to0^+$ gives $P'_t(\xi)=\lim_{h\to0^+}I_h/h-\lim_{h\to0^+}O_h/h$.
We call these two limits the probability inflow and outflow rates, respectively.

At a fixed interior time, the probability of two or more events in this interval is $o(h)$, so only single events contribute to these rates.
We use $\mu_t$ to calculate how quickly probability enters and leaves $\xi$, and then subtract the total outflow rate from the total inflow rate.
We will show that the result equals $d\mu/dt$; dividing by $\mu$ then gives exactly \eqref{eq:state-derivative}.
We first compute outflow, then inflow through acceptance, and finally inflow through rejection or deletion.

\paragraph{Probability flowing out of the state.}
Condition on the current state $\xi=(x,z,u,a)$.
For each $e\in u$, its arrival clock is uniform on $(t,1)$, so its probability of arriving in $(t,t+h]$ is $h/(1-t)$.
Its arrival necessarily changes the state: if accepted, $e$ moves from $u$ into both $x$ and $a$; if rejected, it moves from $u$ into $z$.
Each of the $|u|$ possible arrivals therefore has outgoing rate $1/(1-t)$.
Multiplying by the candidate probability $\mu$ of being in the source state gives total arrival outflow rate $\mu|u|/(1-t)$.

An internal deletion can also change the state.
Each $f\in\Greedy(x)\setminus a$ has an active exponential deletion clock of rate $1/p$, whose probability of ringing within time $h$ is $1-e^{-h/p}=h/p+o(h)$.
A single deletion of $f$ changes the state to $(x\setminus\{f\},z\cup\{f\},u,a)$.
There are $r(x)-|a|$ such elements, so their total deletion rate is $(r(x)-|a|)/p$.
Multiplying this rate by the candidate probability $\mu$ of being in the current state gives total deletion outflow rate $\mu(r(x)-|a|)/p$.

The total probability mass leaving the state over this short interval, when multiplied by the candidate distribution, is therefore $\mu h\bigl(|u|/(1-t)+(r(x)-|a|)/p\bigr)+o(h)$.
Dividing by $h$ gives the outflow rate, and then dividing by $\mu$ gives  precisely the two negative terms in \eqref{eq:state-derivative}.

\paragraph{Probability flowing in through acceptance.}
Fix $e\in a$.
The only state from which accepting $e$ reaches $(x,z,u,a)$ is
\(
(x\setminus\{e\},\ z,\ u\cup\{e\},\ a\setminus\{e\}).
\)
It is a valid predecessor by Lemma~\ref{lem:greedy-exchange} applied with $S=x\setminus\{e\}$, because $e\in\Greedy(x)$.
In each case below, we first compute the candidate probability ratio from \eqref{eq:joint-snapshot}, then use the algorithm's arrival and acceptance rules to calculate the inflow contribution.

If $r(x\setminus\{e\})=r(x)-1$, direct substitution into \eqref{eq:joint-snapshot} gives the candidate probability of the predecessor
\begin{align*}
\mu_t(x\setminus\{e\},z,u\cup\{e\},a\setminus\{e\})
=\mu\cdot \frac{1-t}{p(1-q)}.
\end{align*}
Since \(\Greedy(x)=\Greedy(x\setminus\{e\})\cup\{e\}\), the arrival of \(e\) increases the reference rank, and the algorithm accepts it with probability \(q\).
Its arrival rate is $1/(1-t)$.
Multiplying the candidate probability of the predecessor by these two factors gives inflow $\mu\cdot\frac{1-t}{p(1-q)}\cdot\frac{1}{1-t}\cdot q=\mu\frac{q}{p(1-q)}$.

If $r(x\setminus\{e\})=r(x)$, direct substitution into \eqref{eq:joint-snapshot} instead gives
\begin{align*}
\mu_t(x\setminus\{e\},z,u\cup\{e\},a\setminus\{e\})
=\mu\cdot \frac{(1-t)q}{p(1-q)}.
\end{align*}
For the algorithm, there is a unique $f\notin\Greedy(x)$ such that
\(
\Greedy(x\setminus\{e\})=(\Greedy(x)\setminus\{e\})\cup\{f\}.
\)
Since $a\subseteq\Greedy(x)$, this $f$ does not belong to the predecessor accepted set $a\setminus\{e\}$.
The algorithm therefore accepts the arrival of $e$ with probability one.
Using the arrival rate $1/(1-t)$ and the candidate
probability of the predecessor, the inflow contribution is $\mu\cdot\frac{(1-t)q}{p(1-q)}\cdot\frac{1}{1-t}=\mu\frac{q}{p(1-q)}$.

In either case, each \(e\in a\) contributes \(\mu q/(p(1-q))\) to the acceptance inflow. 
Summing over \(e\in a\) gives total acceptance inflow \(\mu|a|q/(p(1-q))\). Dividing by \(\mu\) yields the third term of \(\eqref{eq:state-derivative}\).

\paragraph{Probability flowing in through rejection or deletion.}
Fix $e\in z$.
In the target state $\xi$, $e$ has arrived but is outside the reference set.
A single event can move $e$ into $z$ either by rejecting its arrival or by deleting it from the reference set.
Both events leave $a$ unchanged. Thus there are two possible predecessor configurations.

A rejected arrival of $e$ comes from $(x,z\setminus\{e\},u\cup\{e\},a)$, where $e$ has not yet arrived.
Direct substitution into \eqref{eq:joint-snapshot} gives this predecessor candidate probability $\mu(1-t)/(t-p)$: it has one more unarrived element and one fewer element in $z$, while $x$ and $a$ are unchanged.
Let $\beta_e$ be the probability that the algorithm rejects $e$ when the reference and accepted sets are $x$ and $a$.
Multiplying the predecessor's candidate probability by the arrival rate $1/(1-t)$ and the rejection probability $\beta_e$ gives rejection inflow $\mu\frac{1-t}{t-p}\cdot\frac{1}{1-t}\cdot\beta_e=\mu\beta_e/(t-p)$.

An internal deletion of $e$ comes from $(x\cup\{e\},z\setminus\{e\},u,a)$, where $e$ is an unaccepted reference element.
Write $B=\Greedy(x)$ and $B^+=\Greedy(x\cup\{e\})$.
This predecessor is valid only if $a\subseteq B^+$, since every accepted element must belong to the reference basis.
Moreover, $e$ can be deleted only if $e\in B^+\setminus a$.
When these conditions hold, its deletion rate is $1/p$, and we multiply the predecessor's candidate probability by this rate to obtain its deletion inflow.
For each $e$, we add the contributions from the rejection and deletion predecessors.
The following cases determine $\beta_e$ and whether the deletion transition is possible.

If $e\notin B^+$, the algorithm rejects an arrival of $e$ with probability one, so $\beta_e=1$ and the rejection inflow is $\mu/(t-p)$.
In the deletion predecessor, $e$ is not in its reference basis $B^+$ and therefore has no deletion clock.
The deletion inflow is zero, and the total contribution is $\mu/(t-p)$.

If $B^+=B\cup\{e\}$, the arrival of $e$ increases the reference rank and is accepted with probability $q$.
Thus $\beta_e=1-q$, giving rejection inflow $\mu(1-q)/(t-p)$.
The deletion predecessor is valid because $a\subseteq B\subseteq B^+$, and $e\in B^+\setminus a$ because $e\in z$ in the target state.
By \eqref{eq:joint-snapshot}, its candidate probability is $\mu pq/(t-p)$: adding $e$ to the reference set and removing it from $z$ gives the factor $p/(t-p)$, while increasing the rank with $a$ unchanged gives the additional factor $q$.
Multiplying by the deletion rate gives deletion inflow $\mu\frac{pq}{t-p}\cdot\frac{1}{p}=\mu q/(t-p)$.
Adding the two inflows gives $\mu(1-q)/(t-p)+\mu q/(t-p)=\mu/(t-p)$.

Otherwise, $B^+=(B\setminus\{f\})\cup\{e\}$ for a unique $f\in B$.
If $f\in a$, accepting $e$ would displace an accepted basis element, so the algorithm rejects $e$ with probability one.
Thus $\beta_e=1$ and the rejection inflow is $\mu/(t-p)$.
The deletion predecessor is invalid: it would have accepted set $a$ containing $f$, but reference basis $B^+$ not containing $f$.
Its candidate probability is therefore zero, so there is no deletion inflow and the total contribution is again $\mu/(t-p)$.

If $f\notin a$, the algorithm accepts the arrival of $e$ with probability one, so $\beta_e=0$ and there is no rejection inflow.
The deletion predecessor is valid because $a\subseteq B\setminus\{f\}\subseteq B^+$, and $e\in B^+\setminus a$ has an active deletion clock.
Its reference rank and accepted-set size equal those of the current state.
Consequently, \eqref{eq:joint-snapshot} gives candidate probability $\mu p/(t-p)$, with no additional factor of $q$.
Multiplying by the deletion rate gives deletion inflow $\mu\frac{p}{t-p}\cdot\frac{1}{p}=\mu/(t-p)$, which is also the total contribution in this case.

The four cases can be summarized by dividing the inflows by $\mu/(t-p)$:
\begin{center}
\begin{tabular}{lccc}
\toprule
Hypothetical insertion of $e$ & Rejected arrival & Reference deletion & Total\\
\midrule
$e\notin B^+$ & $1$ & $0$ & $1$\\
$B^+=B\cup\{e\}$ & $1-q$ & $q$ & $1$\\
$B^+=(B\setminus\{f\})\cup\{e\}$, $f\in a$ & $1$ & $0$ & $1$\\
$B^+=(B\setminus\{f\})\cup\{e\}$, $f\notin a$ & $0$ & $1$ & $1$\\
\bottomrule
\end{tabular}
\end{center}
In each case, the rejection and deletion inflows associated with $e$ sum to $\mu/(t-p)$.
Summing over all $e\in z$ gives total rejection and deletion inflow $\mu|z|/(t-p)$.
Dividing by $\mu$ yields the first term in \eqref{eq:state-derivative}.

\paragraph{Identification with the actual law.}
Every possible event is an acceptance, a rejection, or an internal deletion, so the preceding enumeration accounts for all probability inflows and outflows.
Combining them gives exactly \eqref{eq:state-derivative}.
The valid-state space is closed under transitions by Lemma~\ref{lem:feasibility}.
Thus $\mu_t$ satisfies the full forward system.

To justify the initial and terminal times, first fix any $T$ with $p<T<1$.
The state space is finite, and on $[p,T]$ the total transition rate is at most
$n/(1-T)+n/p$.
The transition coefficients are continuous, including at $p$ because $q(p)=1$.
The forward equations follow by conditioning on the next event; the probability of two or more events in a time interval of length $h$ is $O(h^2)$ uniformly on this compact interval.
They form a finite linear differential system with a unique solution for its initial distribution.
Although the divided expression \eqref{eq:state-derivative} was used only for $t>p$, the undivided products defining $\mu_t$ extend differentiably to $p$.
The verified equations and initial condition therefore identify $\mu_t$ with the actual law on $[p,T]$.
Since $T<1$ was arbitrary, \eqref{eq:joint-snapshot} holds for all $p\le t<1$.
Finally, there is almost surely no event at time $1$, and all arrivals occur strictly before it.
The state consequently has a well-defined terminal value, and its law is the limit as $t\uparrow1$.
Taking that limit in the finite-state formula proves \eqref{eq:joint-snapshot} also at $t=1$.

To obtain the probability of the partition $(X_t,Z_t,U_t)=(x,z,u)$ alone, summing \eqref{eq:joint-snapshot} over all subsets $a\subseteq B$, where $B=\Greedy(x)$, gives 
$$\Prb[X_t=x,Z_t=z,U_t=u]=p^{|x|}(t-p)^{|z|}(1-t)^{|u|}.$$
This formula assigns a factor $p$ to each element of $x$, a factor $t-p$ to each element of $z$, and a factor $1-t$ to each element of $u$.
It is therefore exactly the distribution obtained by assigning each element independently to $X_t$, $Z_t$, or $U_t$ with probabilities $p$, $t-p$, and $1-t$, respectively.
Next, fix a partition $(x,z,u)$ with positive probability. We have the conditional probability $$\Prb[A_t=a\mid X_t=x,Z_t=z,U_t=u]=(1-q)^{|a|}q^{|B|-|a|}$$ for every $a\subseteq B.$
This is the probability of obtaining exactly $a$ when each element of $B$ is independently retained with probability $1-q$: each retained element contributes a factor $1-q$, and each omitted element contributes a factor $q$.
Thus, conditional on the entire partition, $A_t$ is an independent $(1-q)$-thinning of $\Greedy(X_t)$.
\end{proof}

The conditioning in Lemma~\ref{lem:snapshot} is important.
It fixes the current partition, not the initial sample $S$, the complete arrival permutation, or the entire vector of auxiliary times.
Those additional conditionings need not preserve the thinning law.

\subsection{Competitive Ratio}
\label{subsec:competitive-ratio}

We now apply the joint distribution in Lemma~\ref{lem:snapshot} to an element of the canonical optimum.

\begin{proposition}
\label{prop:selection-probability}
For every $p\in(0,1)$ and $e\in O$, Dynamic Thinning satisfies
\begin{align}
\Prb[e\in A]
=\gamma(p),
\qquad
\gamma(p):=p\left(1-\exp\left(-\frac{1-p}{p}\right)\right).
\label{eq:exact-selection}
\end{align}
More generally, if $R_p\subseteq E$ includes each element independently with probability $p$, then
\begin{align}
\E[w(A)]
=\left(1-\exp\left(-\frac{1-p}{p}\right)\right)
\E\bigl[w(\Greedy(R_p))\bigr].
\label{eq:exact-reward}
\end{align}
In particular, the algorithm is $1/\gamma(p)$-probability-competitive.
\end{proposition}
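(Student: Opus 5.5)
We apply Lemma~\ref{lem:snapshot} at the terminal time $t=1$. There $U_1=\varnothing$ almost surely, the final output is $A=A_1$, and $q(1)=\exp(-(1-p)/p)$. By the lemma, each element lies in $X_1$ independently with probability $p$. Conditional on the partition $(X_1,Z_1,U_1)$, and in particular on $X_1$, the set $A_1$ is an independent $(1-q(1))$-thinning of $\Greedy(X_1)$. The marginal law of $X_1$ is therefore that of $R_p$, and this is the only distributional input the proof needs.

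For the per-element statement, fix $e\in O$ (the canonical optimum $O_w$). We first show that $e\in\Greedy(X_1)$ if and only if $e\in X_1$. One direction is trivial, since $\Greedy(X_1)\subseteq X_1$. For the other, if $e\in X_1$, apply Lemma~\ref{lem:greedy-persistence} with $S=X_1\setminus\{e\}$. Conditioning on $X_1$ and using the thinning property then gives
\[
\Prb[e\in A_1\mid X_1]=(1-q(1))\,\mathbf 1[e\in X_1].
\]
Taking expectations yields $\Prb[e\in A]=p\,(1-q(1))=\gamma(p)$, which is \eqref{eq:exact-selection}.

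For \eqref{eq:exact-reward}, write $w(A_1)=\sum_{f}w_f\mathbf 1[f\in A_1]$ and condition on $X_1$. The thinning law gives
\[
\E[w(A_1)\mid X_1]=(1-q(1))\,w(\Greedy(X_1)).
\]
Since $X_1$ has the law of $R_p$, taking expectations gives the identity. Probability-competitiveness with ratio $1/\gamma(p)$ follows from \eqref{eq:exact-selection} and the reduction in Section~\ref{sec:preliminaries}.

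The main point needing care is the use of $t=1$ together with the positive-weight reduction. Lemma~\ref{lem:snapshot} explicitly covers $t=1$ via the limiting argument, and the returned set is $A_1$ because Step~4 runs the process until time $1$. Positive weights were assumed in the algorithm's analysis; for nonnegative weights we rely on the preliminaries' reduction, noting that the greedy rule ignores zero-weight elements. Beyond this, everything is a direct corollary, so I expect no real obstacle.
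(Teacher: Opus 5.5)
Your proposal is correct and follows essentially the same route as the paper. Like the paper, you apply Lemma~\ref{lem:snapshot} at $t=1$ to get that $X_1$ has the law of $R_p$ and that $A_1$, conditional on $X_1$, is a $(1-q(1))$-thinning of $\Greedy(X_1)$, and you then use Lemma~\ref{lem:greedy-persistence} to identify $e\in\Greedy(X_1)$ with $e\in X_1$ for $e\in O$; the only difference is that you prove the per-element identity before the reward identity.
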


\begin{proof}
At time $1$, $U_1=\varnothing$ and $Z_1=E\setminus X_1$.
Lemma~\ref{lem:snapshot} gives, for $a\subseteq\Greedy(x)$,
\begin{align}
\Prb[X_1=x,A_1=a]
=p^{|x|}(1-p)^{n-|x|}(1-q(1))^{|a|}q(1)^{r(x)-|a|}.
\label{eq:terminal-law}
\end{align}
Hence each element of $E$ belongs to $X_1$ independently with probability $p$.
Conditional on $X_1$, every member of $\Greedy(X_1)$ belongs to $A_1$ independently with probability $1-q(1)$.
Since $A=A_1$, linearity of expectation gives
\(
\E[w(A)\mid X_1]=(1-q(1))w(\Greedy(X_1)).
\)
As $X_1$ and $R_p$ have the same distribution,
\begin{align*}
\E[w(A)]
&=\E\bigl[\E[w(A)\mid X_1]\bigr]=(1-q(1))\E\bigl[w(\Greedy(R_p))\bigr],
\end{align*}
which is exactly \eqref{eq:exact-reward}.

Fix $e\in O$.
By Lemma~\ref{lem:greedy-persistence}, for every $x\subseteq E$ containing $e$ we have $e\in\Greedy(x)$.
This gives
\begin{align*}
e\in\Greedy(X_1)\iff e\in X_1.
\end{align*}
By \eqref{eq:terminal-law}, conditional on $X_1=x$, the acceptance probability of $e$ is $1-q(1)$ if $e\in\Greedy(x)$, and zero otherwise.
Applying the fact $\Prb[e\in X_1]=p$, we obtain
\begin{align*}
\Prb[e\in A]
&=\sum_{x\subseteq E}\Prb[X_1=x]\Prb[e\in A\mid X_1=x]\\
&=(1-q(1))\Prb[e\in\Greedy(X_1)]\\
&=(1-q(1))\Prb[e\in X_1]\\
&=p\left(1-\exp\left(-\frac{1-p}{p}\right)\right).
\end{align*}
This proves \eqref{eq:exact-selection}.
Summing the weights of the elements of $O$ yields the competitive guarantee.
\end{proof}

\begin{proof}[Proof of Theorem~\ref{thm:main}]
Lemmas~\ref{lem:online-implementation} and~\ref{lem:feasibility} establish the information requirements, query complexity, ordinality, and independence of the output.
It remains to optimize \eqref{eq:exact-selection}.
For $0<p<1$, differentiation gives
\begin{align*}
\gamma'(p)
&=1-\left(1+\frac1p\right)\exp\left(1-\frac1p\right),\\
\gamma''(p)
&=-\frac{1}{p^3}\exp\left(1-\frac1p\right)<0.
\end{align*}
Moreover, $\gamma'(p)\to1$ as $p\downarrow0$, whereas $\gamma'(p)\to-1$ as $p\uparrow1$.
There is therefore a unique maximizing parameter $p_*\in(0,1)$, determined by
\begin{align*}
\exp\left(1-\frac1{p_*}\right)=\frac{p_*}{1+p_*}.
\end{align*}
At this parameter, $\gamma(p_*)=p_* /(1+p_*)$.
Writing $C_*=1/\gamma(p_*)=1+1/p_*$ gives
$\exp(2-C_*)=1/C_*$, equivalently $C_*-\log C_*=2$.
This proves \eqref{eq:optimal-constant} and yields
\begin{align*}
p_*&\approx0.4659412724,
&\gamma(p_*)&\approx0.3178444329,
&C_*&\approx3.1461932206.
\end{align*}
\end{proof}

For the tightness, note that
 the guarantee in Proposition~\ref{prop:selection-probability} for each fixed $p$ is exact even on a free matroid, in which every subset of $E$ is independent.
For positive weights, $O=E$ and every element is accepted with probability exactly $\gamma(p)$.
Thus $1/\gamma(p)$ is the exact worst-case competitive ratio of Dynamic Thinning with that fixed parameter, and $p_*$ optimizes this family.

If rational numbers are preferred, simple parameter choices can take  $p=1/2$ or $p=27/58$, which give competitive ratios $3.1640$ and $3.1462$, respectively.

\section{Conclusion}
\label{sec:conclusion}
We give a $3.1462$-competitive algorithm for the matroid secretary problem, using only the number of elements and independence queries on already-arrived elements.
The algorithm protects only actual acceptances and deletes unaccepted reference-basis elements at a calibrated rate.
Together with its time-dependent acceptance rule, this yields an exact joint distribution for the reference set, the observed and unobserved elements, and the accepted set.
The resulting conditional thinning gives an exact per-optimum-element selection probability and thus the competitive ratio.

The optimal universal competitive ratio for matroid secretary remains open and lies in $[e\approx2.7182,3.1462]$. The optimal $e$ guarantee is only known for linear matroids~\cite{abdi2026strong,bérczi2026strongsecretaryconjecturetrue}.  
Within the fixed-rate Dynamic Thinning family, the selection formula \eqref{eq:exact-selection} is tight.
A further improvement therefore requires changing the maintained distribution or the update rules.

\paragraph{Use of AI Tools.}
While working on the graphic matroid secretary problem, we learned of Singla’s work \cite{singla2026matroid} resolving the general matroid secretary conjecture and studied his approach. We initiated discussions with GPT-6 Astra to investigate whether this approach could be combined with our existing results for graphic matroids. During these discussions, Astra proposed a sample-only reserve construction for the general problem considering our existing results, yielding the simple 3.7321-competitive algorithm presented in the first version of this paper. We worked through the night to manually revise the proof, conduct additional verification, and finalize a concise proof in detail before releasing our findings.

In this updated version, we subsequently considered Abdi et al.’s conditional-thinning method \cite{abdi2026strong} and explored its application to the sample-only reserve construction through further discussions with Astra. These discussions led to the dynamic reference process and the 3.1462-competitive algorithm presented in this version. The principal algorithmic constructions and proof arguments were generated by Astra through these exchanges. 
As mentioned earlier, we manually finalized the results before releasing these findings. 

Codex was used to assist with revising the manuscript. The authors take responsibility for the mathematical claims and the final content.



\bibliographystyle{plain}
\bibliography{improved-thinning-embedded}

\end{document}